\newtheorem{theorem}{Theorem}[section]
\newtheorem{proposition}[theorem]{Proposition}
\newtheorem{remark}[theorem]{Remark}
\def\pagenumber{1}
\begin{document}
\setcounter{page}{\pagenumber}
\newcommand{\T}{\mathbb{T}}
\newcommand{\R}{\mathbb{R}}
\newcommand{\Q}{\mathbb{Q}}
\newcommand{\N}{\mathbb{N}}
\newcommand{\Z}{\mathbb{Z}}
\newcommand{\tx}[1]{\quad\mbox{#1}\quad}

\title[Hamiltonian potentials and quartic NVE]{\mbox{}\\[1cm]On Hamiltonian potentials with quartic polynomial normal variational equations}
\author[P. Acosta-Hum\'anez]{Primitivo B. Acosta-Hum\'anez$^1$}
\author[D. Blazquez-Sanz]{David Blazquez-Sanz$^2$}
\author[C. Vargas Contreras]{Camilo A. Vargas-Contreras $^2$}
\maketitle
\vspace{-.5cm}

{\footnotesize
\begin{center}
$^1$Departament Matem\`atica Aplicada I, Universitat Polit\`ecnica
de Catalunya, Barcelona, Spain.

$^2$Escuela de Matem\'a ticas, Universidad Sergio Arboleda,
Bogot\'a, Colombia.
\end{center}
\vspace{.5cm}

\noindent{\bf ABSTRACT.} In this paper we prove that there exists
only one family of classical Hamiltonian systems of two degrees of
freedom with invariant plane $\Gamma=\{q_2=p_2=0\}$ whose normal
variational equation around integral curves in $\Gamma$ is
generically a Hill-Schr\"odinger equation with quartic polynomial
potential. In particular, by means of the Morales-Ramis theory,
these Hamiltonian systems are non-integrable through rational
first integrals.

\bigskip

\noindent{\bf KEYWORDS AND PHRASES.} Differential Galois group,
Hamiltonian system, Hen\'on-Heiles system, Morales-Ramis theory,
non-integrability, normal variational equation.

\bigskip

\noindent{\bf AMS (MOS) Subject Classification.} 37J30, 12H05,
70H07 }
\section{\bf INTRODUCTION}\label{s1}
The non-integrability in the Liouville sense of some physical
problems, in particular Hamiltonian systems of two degrees of
freedom, such as Henon-Heiles systems, three body problems,
Bianchi models, etc., has been
 studied by many authors (see \cite{Audin,MorMonograph} and references therein).
 In \cite{Audin} it was shown that the Henon-Heiles system given by
\begin{displaymath} H(x_1,x_2,y_1,y_2)={1\over
2}(y_1^2+y_2^2)-x_2^2(A+x_1)-{\lambda\over3}x_1^3
\end{displaymath}
does not admit an additional rational first integral for
$\lambda=0$. The approach used there was by means of the
Morales-Ramis theory, analyzing the differential Galois groups of
the normal variational equations (NVEs) around an invariant plane
$\Gamma=\{x_2=y_2=0\}$. According to the Morales-Ramis theory, if
the differential Galois group of one NVE is a non-virtually
abelian group, i.e. the identity connected component is a
non-abelian group, then the Hamiltonian system is non-integrable
in the Liouville sense (see \cite{MorRamis1, MorRamis2} and also
\cite{MorMonograph}).

One result obtained in \cite{acbl} and presented later in
\cite{acbl2}, is the following theorem.\\

{\bf Theorem.} \emph{The Galois group of the Hill-Schr\"odinger
equation,
$$\ddot \xi = P_n(t)\xi,$$
where $P_n(t)\in\mathbb C[t]$ is a non-constant polynomial, is a
non-abelian connected group isomorphic either to
$\mathrm{SL}(2,\mathbb C)$ or to the semidirect product of
$\mathbb{C}^*$ with $\mathbb{C}$ (also known as the Borel group).}
\\

 Now, assuming that $\beta(x_1,x_2)$ is analytic around $\Gamma$ and considering the following
\textit{generalization} of the Hen\'on-Heiles system
\begin{equation}\label{genhh} H(x_1,x_2,y_1,y_2)={y_1^2+y_2^2\over 2}-x_2^2(A_0+A_1x_1+\ldots+A_nx_1^n)-{\lambda\over3}x_1^3+\beta(x_1,x_2)x_2^3,
\end{equation}
it is proven in \cite{acbl} that \eqref{genhh} does not admit an
additional rational first integral for $\lambda=0$ because the
NVEs around $\Gamma$ are given by $\ddot \xi=P_n(t)\xi$.

The original problem considered in \cite{acbl}, see also
\cite{acbl2}, was given as follows:

\emph{\begin{center} Given the differential equation $\ddot
x=r(t)x$, find all the families of\\ Hamiltonian systems with
two degrees of freedom in which there exists an invariant plane
$\Gamma=\{x_2=y_2=0\}$ and the normal variational equations\\
around this plane are given by $\ddot x=r(t)x$. In particular,
what happens with $r(t)=P_n(t)$ and with $r(t)=a\cos\omega
t+b\sin\omega t$?
\end{center}
}
\medskip

In the polynomial case, it was proven that, if $n$ is odd, then
there exists only one family of potentials with this polynomial
NVE, which up to constants corresponds to the potential of the
generalized Hen\'on-Heiles system \eqref{genhh}. However, for the
quadratic case and a particular case of the classical Mathieu equation (obtained by a scalling and a shift in time) we discovered
several families of potentials falling in the same NVE. The
polynomial case with $n>2$, $n$ even, was set as an open question.
In this paper we prove that for $n=4$, the so-called {\it quartic
case}, there exists only one family of potentials with this
polynomial NVE, which up to constants corresponds to the potential
of the generalized Hen\'on-Heiles system \eqref{genhh} for
$\lambda=0$.
\\

Related problems with this approach have been studied before by
Morales and Sim\'o (see \cite{mosi} ) and by Baider, Churchill and
Rod (see \cite{bachro}). The use of techniques of Differential
Galois theory to determine the non-integrability of Hamiltonian
systems appeared independently for first time in \cite{MorThs,
MorSimo1} and \cite{ChRo1991}, followed by \cite{bachro},
\cite{ChRoSi1995} and \cite{mosi}. A common limitation presented
in these works is that they only analyzed cases of fuchsian
monodromy groups, avoiding cases of irregular singularities of
linear differential equations. The case of the NVEs with irregular
singularities can be approached from the Morales-Ramis framework
(\cite{MorRamis1, MorRamis2}, see also \cite{MorMonograph}).

\section{Morales-Ramis theory}
In this section we set the theoretical background needed to
understand the rest of the paper.

\subsection{Integrability of Hamiltonian systems}
  A symplectic manifold (real or complex) $M_{2n}$ is a $2n$-dimensional manifold,
provided with a non-degenerate closed 2-form $\omega_2$. This
closed $2$-form, the so-called \emph{symplectic form},
 gives us a natural isomorphism
between vector bundles, $\flat\colon TM\to T^*M$. Given a function
$H$ on $M$, there is an unique vector field $X_H$ such that
  $$\flat(X_H) = dH,$$
which is \emph{the Hamiltonian vector field of $H$}. Furthermore,
it has the structure of a \emph{Poisson algebra} over the ring of
differentiable functions of $M_{2n}$ by defining:
$$\{H,F\} := X_H F.$$
  We say that $H$ and $F$ are \emph{in involution} if and only if $\{H,F\} = 0$.
From our definition, it is obvious that $F$ is a \emph{first
integral} of $X_H$ if and only if $H$ and $F$ are in involution.
In particular $H$ is always a first integral of $X_H$. Moreover,
if $H$ and $F$ are in involution, then their flows commute.

In a system of canonical coordinates, $p_1,\ldots,p_n$,
$q_1,\ldots,q_n$ the symplectic form $\omega_2$ is given by
$\omega_2=\sum_{1=1}^n dp_i \wedge dq_i$ and the equations of the
flow of $X_H$ can be written in the form
\begin{equation}\label{hameqs}
\dot q = \frac{\partial H}{\partial p} \left( = \{H, q\}\right),
\quad \dot p  = - \frac{\partial H}{\partial q} \left(= \{ H,
p\}\right).
\end{equation}
The equations given in expression \eqref{hameqs} are known as
\emph{Hamilton equations}.

\begin{theorem}[Liouville-Arnold]
  Let $X_H$ be a Hamiltonian defined on a real symplectic manifold
$M_{2n}$. Assume that there are $n$ functionally independent first
integrals $F_1,\ldots, F_n$ in involution.
  Let $M_a$ be a non-singular (that is, $dF_1,\ldots, dF_n$ are independent  over each point of
$M_a$) level manifold, $$M_a = \{p\colon F_1(p)=a_1,\ldots,
F_n(p)=a_n\}.$$
\begin{enumerate}
\item If $M_a$ is compact and connected, then it is a torus
$M_a\simeq \mathbb R^n/\mathbb Z^n$.
\item In a neighborhood of the torus $M_a$ there are functions
$I_1,\ldots I_n,\phi_1,\ldots,\phi_n$ such that
$$\omega_2 = \sum_{i=1}^n d I_i\wedge d\phi_i,$$
and $\{H,I_j\} = 0$ for $j = 1,\ldots, n$.
\end{enumerate}
\end{theorem}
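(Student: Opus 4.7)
The plan is to prove the two parts in sequence, exploiting the geometric meaning of the hypotheses.

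First I would show that $M_a$ is an embedded submanifold of dimension $n$, which is immediate from the regular value theorem since $dF_1,\ldots,dF_n$ are independent on $M_a$. Next I would observe that each Hamiltonian vector field $X_{F_j}$ is tangent to $M_a$, since $X_{F_j}F_k=\{F_j,F_k\}=0$; moreover, the $X_{F_j}$ are pointwise linearly independent (because $dF_1,\ldots,dF_n$ are, and $\flat$ is an isomorphism) and therefore span $TM_a$. The relation $\omega_2(X_{F_j},X_{F_k})=\{F_j,F_k\}=0$ then forces $\omega_2|_{M_a}\equiv 0$, so $M_a$ is a Lagrangian submanifold.

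For part (1), the involution $\{F_j,F_k\}=0$ implies that the flows $\phi^j_{t_j}$ of $X_{F_j}$ commute. Using that the $X_{F_j}$ are complete on the compact set $M_a$, I would define an action
\[
\Phi\colon \R^n\times M_a\to M_a,\qquad \Phi(t,p)=\phi^1_{t_1}\circ\cdots\circ\phi^n_{t_n}(p),
\]
which is locally a diffeomorphism because the generating vector fields span $TM_a$. Connectedness of $M_a$ makes the action transitive, and the stabilizer $\Lambda_p\subset\R^n$ of a point is a closed subgroup which, by local freeness, must be discrete. Compactness of $M_a\simeq\R^n/\Lambda_p$ forces $\Lambda_p$ to be a lattice of full rank $n$, hence $M_a\simeq\R^n/\Z^n$ as claimed.

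For part (2), the hard part, I would use a transversal $n$-disk $\Sigma$ to $M_a$ and the local triviality of the map $F=(F_1,\ldots,F_n)$ to obtain a neighborhood $U\simeq B\times\T^n$ foliated by the Lagrangian tori $M_b$, with $b\in B\subset\R^n$ small. Choosing a smooth basis of $1$-cycles $\gamma_1(b),\ldots,\gamma_n(b)\subset M_b$ depending continuously on $b$, I would define the action variables
\[
I_j(b)=\frac{1}{2\pi}\oint_{\gamma_j(b)}\theta,
\]
where $\theta$ is any local primitive of $\omega_2$ (existing because $M_a$ is Lagrangian, so $\omega_2|_U$ is exact after shrinking, by the Poincaré lemma applied to the retraction onto $M_a$). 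Independence of $\theta$ (up to exact forms) makes $I_j$ well-defined, and one checks that $dI_1,\ldots,dI_n$ are independent, so $(I_1,\ldots,I_n)$ gives a new set of functions in involution whose level sets are the same tori. Finally, the conjugate angle variables $\phi_j$ are constructed as the times of the commuting flows of the $X_{I_j}$ on each torus (suitably rescaled so that the periods are $2\pi$), and one verifies directly that $\omega_2=\sum_j dI_j\wedge d\phi_j$ using the Lagrangian character of the tori together with the definition of $I_j$ via periods. Since $H$ is constant along each torus $M_a$ (it is a function of $F_1,\ldots,F_n$, hence of $I_1,\ldots,I_n$), one has $\{H,I_j\}=0$.

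The main obstacle is the construction of the action variables and the verification that the symplectic form takes the canonical form in action-angle coordinates; this requires a careful use of the Poincaré lemma on a tubular neighborhood of the Lagrangian torus and the period computation, rather than any routine manipulation.
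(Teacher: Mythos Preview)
The paper does not prove this statement: the Liouville--Arnold theorem is quoted there as classical background, with no proof or even sketch supplied, so there is nothing in the paper to compare your argument against.

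That said, your outline is the standard textbook proof (essentially Arnold's): the commuting-flow action of $\R^n$ on $M_a$ identifies it with $\R^n/\Lambda$ for a full-rank lattice $\Lambda$, and the action variables are produced as period integrals of a primitive of $\omega_2$ over a basis of cycles, with angles given by flow times of the $X_{I_j}$. Two small points worth tightening: (i) for the existence of a primitive $\theta$ near $M_a$ you should invoke a Lagrangian neighborhood or Weinstein-type argument rather than the bare Poincar\'e lemma, since the torus is not contractible; what matters is that $\omega_2$ vanishes on $M_a$ and is closed, so on a tubular neighborhood retracting to $M_a$ it is exact; (ii) the claim that $H$ is ``a function of $F_1,\ldots,F_n$'' follows because $\{H,F_j\}=0$ forces $H$ to be constant on each torus $M_b$, hence locally a function of $b$, which is what you need for $\{H,I_j\}=0$. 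With those clarifications the sketch is sound.
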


From now on, we will consider $\mathbb{C}^{2n}$ as a complex
symplectic manifold. The Liouville-Arnold theorem gives us a
notion of integrability for Hamiltonian systems. A Hamiltonian $H$
in $\mathbb{C}^{2n}$ is called \emph{integrable in Liouville's
sense} if and only if there exist $n$ independent first integrals
of $X_H$ in involution. We will say that $H$ is integrable
\emph{by rational functions} if and only we can find a complete
set of first integrals within the family of rational functions.

\subsection{Variational equations}

We want to relate the integrability of Hamiltonian systems with
the Picard-Vessiot theory. We deal with non-linear Hamiltonian
systems. Nevertheless, given a Hamiltonian $H$ in
$\mathbb{C}^{2n}$ and $\Gamma$ an integral curve of $X_H$, we can
consider the \emph{first variational equation} (VE), namely
$$\mathcal{L}_{X_H}\xi=0,$$
in which the linear equation is induced over the tangent bundle
($\xi$ represents a vector field supported on $\Gamma$).

 Let $\Gamma$ be parameterized by $\gamma\colon t\mapsto (x(t),y(t))$
in such way that
$$\frac{d x_i}{d t} = \frac{\partial H}{\partial y_i}, \quad
\frac{d y_i}{dt} = - \frac{\partial H}{\partial x_i}.$$

  Then the VE along $\Gamma$ is the linear system,
$$\left(\begin{array}{c} \dot \xi_i \\ \dot\eta_i, \end{array}\right) =
\left(\begin{array}{cc}
\frac{\partial^2 H}{\partial y_i\partial x_j}(\gamma(t)) & \frac{\partial^2 H}{\partial y_i \partial y_j}(\gamma(t)) \\
- \frac{\partial^2 H}{\partial x_i \partial x_j}(\gamma(t)) &
-\frac{\partial^2 H}{\partial x_i \partial y_j}(\gamma(t))
\end{array}\right)
 \left(\begin{array}{c}\xi_i
\\ \eta_i \end{array}\right).$$

  From the definition of Lie derivative, it follows that
$$\xi_i(t) = \frac{\partial H}{\partial y_i} (\gamma(t)),\quad
\eta_i(t) = - \frac{\partial H}{\partial x_i}(\gamma(t)),$$ is a
solution of the VE. We can use a generalization of D'Alambert's
method to reduce our VE (see \cite{MorRamis1, MorRamis2} and see
also \cite{MorMonograph}), obtaining the so-called \emph{normal
variational equation} (the NVE). We can see that the NVE is a
linear system of rank $2(n-1)$. In the case of Hamiltonian systems
of two degrees of freedom, their NVE can be seen as second order
linear homogeneous differential equation.

\subsection{Non-integrability tools}

The Morales-Ramis theory relates the integrability of Hamiltonian
systems in the Liouville sense with the integrability of linear
differential equations  in the sense of differential Galois theory
(see \cite{MorRamis1, MorRamis2} and see also
\cite{MorMonograph}). In such an approach the linearization
(variational equations) of Hamiltonian systems along some known
particular solution is studied. If the Hamiltonian system is
integrable in the Liouville sense, then we expect that the
linearized equation has good properties in the sense of
differential Galois theory (also known as Picard-Vessiot theory).
To be more precise, for integrable Hamiltonian systems, the Galois
group of the linearized equation must be virtually abelian, i.e.
its identity connected component is abelian. This gives us the
best non-integrability criterion known so far for Hamiltonian
systems. This approach has been extended to higher order
variational equations in \cite{MorRamSimo} and also to
non-autonomous Hamiltonian systems in \cite{ac}.
\\

The Morales-Ramis theory is composed by several results relating
the existence of first integrals of $H$ with the Galois group of
the variational equations (see for example \cite{MorRamis1},
\cite{MorRamis2} and see also \cite{MorMonograph}).

\medskip

Most applications of the Picard-Vessiot theory to the
integrability analysis, are studied considering meromorphic
functions, due to the fact that the NVE are of hypergeometric type
(every singular point is a singular regular point, including the
points at infinity). In the case of polynomial NVE there exists
only one singular point, the point at infinity, $t=\infty$, which
is an irregular singular point. Hence we will only work with
particular solutions in the context of meromorphic functions with
certain properties of regularity near to the infinity point, that
is, rational functions of the \emph{positions and momenta}. In
this context, the Galoisian obstruction is given by means of
rational functions (see for example \cite{MorRamis1},
\cite{MorRamis2} and see also \cite{MorMonograph}).

In this paper we will use the following result:

\begin{theorem}[\cite{MorRamis1}]\label{:MR}
  Let $H$ be a Hamiltonian in $\mathbb{C}^{2n}$ and $\gamma$ a particular
solution such that the {\rm NVE} has irregular singularities at
points of $\gamma$ at infinity. Then, if $H$ is completely
integrable by rational functions, then the identity component of
the Galois Group of the  {\rm NVE} is abelian.
\end{theorem}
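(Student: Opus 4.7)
The plan is to leverage the correspondence between Poisson--commuting rational first integrals of $X_H$ and Galois--invariant rational functions of the variational equation along $\gamma$, and then exploit the symplectic nature of the VE to pin down the structure of the identity component of the Galois group.

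First, I would take $n$ functionally independent rational first integrals $H = F_1,\ldots,F_n$ in involution, whose existence is granted by the Liouville hypothesis. Restricting and linearizing along $\gamma$, the differentials $dF_i$ descend to $n$ linearly independent first integrals of the VE, i.e.\ $n$ Galois--invariant rational functions on the solution space of the VE. Since $X_H$ preserves the symplectic form, its variational equation takes values in the symplectic Lie algebra, so the Galois group $G$ sits inside $Sp(2n,\mathbb{C})$. The involution relations $\{F_i,F_j\}=0$ then translate, under the symplectic identification, into commutation relations among the associated invariant derivations on the Picard--Vessiot extension.

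Second, I would argue that the presence of $n$ commuting, generically independent invariants forces the Lie algebra of $G$ into the centralizer of a generically regular commuting family in $\mathfrak{sp}(2n,\mathbb{C})$, hence into an abelian Lie subalgebra; this is exactly the content of ``identity component abelian''. To pass from the full VE to the NVE I would use that the flow direction along $\gamma$ together with $dH$ already provide a rank--$2$ Galois--invariant subbundle on which the action is essentially trivial, so abelianity descends to the rank--$2(n-1)$ quotient.

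The hard part will be the irregular singularity at infinity. In the regular (Fuchsian) setting one would replace $G^{0}$ by the Zariski closure of the monodromy and run a Ziglin--type argument directly; in the irregular regime this fails. Instead one must invoke the Ramis density theorem, according to which $G^{0}$ is topologically generated by the formal monodromy, the exponential torus, and the Stokes operators. The delicate step is to verify that the constraints imposed by the invariants $dF_i$ are respected not only by the (abelian) formal monodromy and exponential torus but also by the Stokes operators, whose potential non--commutativity is precisely the obstruction to abelianity. This compatibility is where the Morales--Ramis refinement of Ziglin's theorem genuinely intervenes and is the technical core of the argument.
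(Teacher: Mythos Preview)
The paper does not give a proof of this theorem: it is stated with attribution to \cite{MorRamis1} and then used as a black box. There is consequently no ``paper's own proof'' against which to compare your proposal.

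That said, your sketch is broadly in the spirit of the Morales--Ramis argument, but it misplaces the difficulty. The theorem is formulated directly for the differential Galois group, not for the monodromy group, and the proof that $G^{0}$ is abelian is purely algebraic in the Picard--Vessiot framework; it does not care whether the singularities are regular or irregular. The Stokes operators and the exponential torus belong to $G$ by definition of the Galois group, so there is no separate ``compatibility with the Stokes operators'' to verify, and Ramis density plays no role in the proof. What you call ``the technical core'' is therefore empty here. The genuine core is (i) Morales's key lemma (extending Ziglin) showing that the $n$ Poisson-commuting rational first integrals produce $n$ independent $G$-invariant rational functions on the symplectic solution space of the VE that are still in involution, and (ii) the algebraic-group fact that a connected algebraic subgroup of $Sp(2n,\mathbb C)$ admitting $n$ independent invariant functions in involution is abelian. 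Your phrase ``commutation relations among the associated invariant derivations'' is off: the relevant objects are invariant \emph{functions}, and it is their Poisson involution, not any commutation of derivations, that forces the Lie algebra of $G$ to be abelian. Also, the claimed linear independence of the $dF_i$ along $\gamma$ can fail and must be repaired (this is exactly what the key lemma handles); you pass over this without comment.
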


\begin{remark}
  Here, the field of coefficients of the NVE is the field of
meromorphic functions on $\gamma$.
\end{remark}

\section{Method to determine families of Hamiltonians with specific NVE}
This method was implemented in \cite{acbl} as a generalization of
the method shown in \cite{mosi}. This section is devoted to this
method.

\medskip

  Let us consider a classical Hamiltonian of two degrees of freedom,
$$H = \frac{y_1^2+y_2^2}{2} + V(x_1,x_2).$$
  $V$ is the \emph{potential function}, and it is assumed to be
analytical in some open subset of $\mathbb{C}^2$. The evolution of
the system is determined by Hamilton equations:

$$ \dot x_1 = y_1,\quad \dot x_2 = y_2, \quad \dot y_1 =
-\frac{\partial V}{\partial x_1},\quad \dot y_2 = -\frac{\partial
V}{\partial x_2}.$$

  Let us assume that the plane $\Gamma = \{x_2=0, y_2 = 0\}$ is
an invariant manifold of the Hamiltonian. We keep in mind that the
family of integral curves lying on $\Gamma$ is parameterized by
the energy $h = H|_\Gamma$, but we do not need to use it
explicitly. We are interested in studying the linear approximation
of the system near $\Gamma$. Since $\Gamma$ is an invariant
manifold, we have
$$\left.\frac{\partial V}{\partial x_2}\right|_\Gamma = 0,$$
so that the NVE for a particular solution
$$t\mapsto\gamma(t) =(x_1(t), y_1 = \dot x_1(t), x_2 = 0, y_2 = 0),$$
is
$$\dot \xi = \eta,\quad \dot \eta = -\left[\frac{\partial^2 V}{\partial x_2^2}(x_1(t),0)\right] \xi.$$

Let us define
$$\phi(x_1) = V(x_1, 0),\quad \alpha(x_1) =
- \frac{\partial^2 V}{\partial x_2^2}(x_1,0),$$ and then we write
the second order Taylor series in $x_2$ for $V$, obtaining the
following expression for $H$
\begin{equation}\label{:Hgeneral}
  H = \frac{y_1^2+y_2^2}{2} +  \phi(x_1) - \alpha(x_1)\frac{x_2^2}{2}
  + \beta(x_1,x_2)x_2^3,
\end{equation}
which is the \emph{general form of a classical analytic
Hamiltonian, with invariant plane $\Gamma$}, provided that a
Taylor expansion of the potential around $\{x_2=0\}$ exists. The
NVE associated to any integral curve lying on $\Gamma$ is
\begin{equation}\label{:NVE}
\ddot \xi = \alpha(x_1(t)) \xi.
\end{equation}

\subsection{General Method}

  We are interested in computing Hamiltonians of the family
(\ref{:Hgeneral}), such that its NVE (\ref{:NVE}) belongs to a
specific family of Linear Differential Equations (LDE). Then we
can apply our results about the integrability of this LDE, and the
Morales-Ramis theorem to obtain information about the
non-integrability of such Hamiltonians.
\\

  From now on, we will write $a(t) = \alpha(x_1(t))$ for a generic
curve $\gamma$ lying on $\Gamma$, parameterized by $t$. Then, the
NVE is
\begin{equation}\label{:NVEa}
\ddot \xi = a(t)\xi.
\end{equation}

The following step is to consider a differential polynomial
$Q(\eta,\dot\eta,\ddot\eta,\ldots)
\in\mathbb{C}[\eta,\dot\eta,\ddot\eta,\ldots]$, being $\eta$ a
differential indeterminate ($Q$ is polynomial in $\eta$ and a
finite number of the successive derivatives of $\eta$). After, we
need to compute all Hamiltonians in the family \eqref{:Hgeneral}
such that for any particular solution in $\Gamma$, the coefficient
$a(t)$ of the corresponding NVE satisfies $Q(a, \dot a, \ddot a,
\ldots ) = 0$.

\medskip

  We should notice that for a generic integral curve
$\gamma(t) = (x_1(t),y_1 = \dot x_1(t))$ lying on $\Gamma$,
(\ref{:NVEa}) depends only of the values of the functions $\alpha$
and $\phi$. It depends on $\alpha(x_1)$, since $a(t) =
\alpha(x_1(t))$. We observe that the curve $\gamma(t)$ is a
solution of the restricted Hamiltonian
\begin{equation}\label{:restrictedH}
  h = \frac{y_1^2}{2} + \phi(x_1)
\end{equation}
  whose associated Hamiltonian vector field is
\begin{equation}\label{:Xh}
  X_h = y_1\frac{\partial}{\partial x_1} - \frac{d\phi}{d x_1}\frac{\partial}{\partial
  y_1}.
\end{equation}
 Thus $x_1(t)$ is a solution of the differential equation $\ddot x_1 = -\frac{d\phi}{d x_1},$
and then, the relation of $x_1(t)$ is given by $\phi$.

\medskip

 Since $\gamma(t)$ is an integral curve of $X_h$, for any function
$f(x_1,y_1)$ defined in $\Gamma$ we have
  $$\frac{d}{d t}\gamma^*(f) = \gamma^*(X_h f),$$
where $\gamma^*$ denotes the usual pullback of functions. Then,
using $a(t) = \gamma^*(\alpha)$, we have for each $k\geq 0$,
\begin{equation}
\frac{d^k a}{d t^k} = \gamma^*(X_h^k\alpha),
\end{equation}
so that
$$Q(a,\dot a, \ddot a,\ldots ) = Q(\gamma^*(\alpha),\gamma^*(X_h\alpha),\gamma^*(X_h^2\alpha),\ldots).$$

  There is an integral curve of the Hamiltonian passing through each point of $\Gamma$, so that
we have proven the following.

\begin{proposition}\label{:prop2}
  Let $H$ be a Hamiltonian of the family {\rm (\ref{:Hgeneral})}, and let $Q(a,\dot a,\ddot a,\ldots)$ be a differential
polynomial with constants coefficients. Then, for each integral
curve lying on $\Gamma$, the coefficient $a(t)$ of the NVE {\rm
(\ref{:NVEa})} verifies $Q(a,\dot a, \ddot a, \ldots,) = 0$ if and
only if the function
$$ \hat Q(x_1,y_1) = Q(\alpha, X_h\alpha, X_h^2 \alpha, \ldots)$$
vanishes on $\Gamma$.
\end{proposition}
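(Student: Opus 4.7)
The plan is to exploit the identification $a(t)=\gamma^{*}(\alpha)$ together with the chain rule along the flow of $X_{h}$ to convert the differential equation $Q(a,\dot a,\ddot a,\ldots)=0$ into the algebraic condition $\hat Q=0$ on $\Gamma$, and then to quantify over integral curves to upgrade pointwise information.

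First I would record the elementary identity $\frac{d}{dt}\gamma^{*}(f)=\gamma^{*}(X_{h}f)$ for any $f\in\mathcal{O}(\Gamma)$ and any integral curve $\gamma$ of $X_{h}$ in $\Gamma$ (this is just the definition of the hamiltonian vector field in~\eqref{:Xh}). Iterating it by induction on $k\geq 0$ yields
\begin{equation*}
\frac{d^{k}a}{dt^{k}}=\frac{d^{k}}{dt^{k}}\gamma^{*}(\alpha)=\gamma^{*}(X_{h}^{k}\alpha),
\end{equation*}
as was already observed in the paragraph preceding the statement. Substituting these identities into $Q$ and using that $Q$ has constant coefficients (so it commutes with the pullback $\gamma^{*}$, which is a morphism of $\mathbb C$-algebras), I obtain the key formula
\begin{equation*}
Q\!\left(a,\dot a,\ddot a,\ldots\right)\;=\;\gamma^{*}\!\left(Q(\alpha,X_{h}\alpha,X_{h}^{2}\alpha,\ldots)\right)\;=\;\gamma^{*}(\hat Q).
\end{equation*}

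From this formula both implications drop out. For the ``if'' direction, if $\hat Q$ vanishes identically on $\Gamma$ then $\gamma^{*}(\hat Q)\equiv 0$ for every integral curve $\gamma\subset\Gamma$, so $Q(a,\dot a,\ddot a,\ldots)=0$ along every such curve. For the ``only if'' direction, suppose $Q(a,\dot a,\ddot a,\ldots)=0$ along every integral curve $\gamma\subset\Gamma$; then $\gamma^{*}(\hat Q)=0$ for all such $\gamma$. Since $X_{h}$ is a regular vector field on the generic locus of $\Gamma$ (the singular set being the critical points of $\phi$), through every generic point of $\Gamma$ there passes an integral curve, and therefore $\hat Q$ vanishes on a dense subset of $\Gamma$. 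Because $\hat Q$ is analytic (in fact polynomial in $y_{1}$ with analytic coefficients in $x_{1}$), it vanishes identically on $\Gamma$.

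The only mildly delicate point I foresee is this last step of passing from vanishing along every generic integral curve to vanishing as a function on $\Gamma$: one must remark that the union of integral curves through non-critical points of $\phi$ is dense in $\Gamma$ and invoke analyticity of $\hat Q$, rather than treating the equality pointwise. Everything else is the straightforward bookkeeping of the pullback identity and induction on $k$.
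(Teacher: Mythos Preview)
Your proposal is correct and follows essentially the same route as the paper: the paper also derives $\frac{d^{k}a}{dt^{k}}=\gamma^{*}(X_{h}^{k}\alpha)$ from the pullback identity, substitutes into $Q$, and then observes that an integral curve passes through every point of $\Gamma$. Your additional care about the singular locus is not actually needed, since even at an equilibrium of $X_{h}$ the constant curve $\gamma(t)\equiv p$ is an integral curve and the identity $Q(a,\dot a,\ldots)=\hat Q(\gamma(t))$ still holds pointwise; but the density-plus-analyticity argument is harmless and perfectly valid.
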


\begin{remark}
  In fact, the NVE of an integral curve depends on the
  parameterization, while our criterion does not depend on any choice of parameterization of
the integral curves. We observe that a polynomial $Q(a,\dot a,
\ddot a,\ldots)$ with constant coefficients is an invariant of the
group by translations of time.
\end{remark}

  Next, we will see that $\hat Q(x_1,y_1)$ is a polynomial in $y_1$ and its coefficients are
differential polynomials in $\alpha, \phi$. If we write down the
expressions for successive Lie derivatives of $\alpha$, we obtain
\begin{align*}\begin{split}
X_h\alpha&=y_1\frac{d\alpha}{dx_1},\\
X_h^2\alpha&=y_1^2\frac{d^2\alpha}{dx_1^2}-\frac{d\phi}{dx_1}\frac{d\alpha}{dx_i},\\
X_h^3\alpha&=y_1^3\frac{d^3\alpha}{dx_1^3}-y_1\left(\frac{d}{dx_1}\left(\frac{d\phi}{dx_1}\frac{d\alpha}{dx_1}\right)
    +2\frac{d\phi}{dx_1}\frac{d^2\alpha}{dx_1^2}\right)
\end{split}\end{align*}

  In general form we have
\begin{equation}\label{:induction}
X_h^{n+1}\alpha = y_1\frac{\partial X_h^{n}\alpha}{d x_1} -
\frac{d\phi}{d x_1}\frac{\partial X_h^{n}}{\partial y_1};
\end{equation}
it inductively follows that they all are polynomials in $y_1$, in
which their coefficients are differential polynomials in $\alpha$
and $\phi$. If we write $X_h^n\alpha$ explicitly,
\begin{equation}\label{:Ex}
X_h^n\alpha = \sum_{n\geq k \geq 0} E_{n,k}(\alpha,\phi)y_1^k,
\end{equation}
we can see that the coefficients $E_{n,k}(\alpha,\phi) \in
\mathbb{C}\left[\alpha,\phi,\frac{d^r\alpha}{d
x_1^r},\frac{d^s\phi}{dx_1^s}\right]$ satisfy the following
recurrence law
\begin{equation}\label{:rlaw}
  E_{n+1,k}(\alpha,\phi) = \frac{d}{d x_1} E_{n,k-1}(\alpha,\phi) - (k+1)E_{n,k+1}(\alpha,\phi)\frac{d\phi}{d x_1}
\end{equation}
with initial conditions
\begin{equation}\label{:condition}
E_{1,1}(\alpha,\phi) = \frac{d \alpha}{d x_1},\quad
E_{1,k}(\alpha,\phi) = 0 \,\,\,\forall k\neq 1.
\end{equation}

\begin{remark}\label{:remark}
  The recurrence law (\ref{:rlaw}) and the initial conditions (\ref{:condition}) determine the coefficients
$E_{n,k}(\alpha,\phi)$. We can compute the values of some of them
easily:
\begin{itemize}
\item $E_{n,n}(\alpha,\phi) = \frac{d^n\alpha}{d x_1^n}$ for all $n \geq 1$.
\item $E_{n,k}(\alpha,\phi)= 0$ if $n-k$ is odd, or $k<0$, or $k>n$.
\end{itemize}
\end{remark}

\section{Main Result: Families of Hamiltonian systems with quartic NVEs}

\begin{theorem}
  Let $H = T + V$ be a classical Hamiltonian with invariant plane $\Gamma$
such that the generic NVE along integral curves in $\Gamma$ is a
Hill-Schr\"odinger equation with quartic polynomial coefficient.
Then, the potential $V$ up to constants corresponds to the
potential of the generalized Hen\'on-Heiles system \eqref{genhh}
with $n=4$ and $\lambda=0$.
\end{theorem}

\begin{proof}
  Following our general method, the family of potentials
satisfying the assumptions of the theorem are given by the
solutions $\alpha(x_1)$, $\phi(x_1)$ of the system of differential
equations
$$E_{5,5}(\alpha,\phi) = 0,\quad  E_{5,3}(\alpha,\phi) = 0,\quad
E_{5,1}(\alpha,\phi)=0.$$ The first equation is just the following
$$E_{5,5}(\alpha,\phi) = \frac{d^5\alpha}{d x_1^5}=0,$$
and then we know that $\alpha$ is a quartic polynomial in $x_1$,
$$\alpha = a + b x_1 + c x_1^2 + d x_1^3 + e x_1^4,$$
where $a,b, c, d, e$ are complex numbers and $e$ does not vanish.
Then we substitute $\alpha$ into the equations
$$E_{5,3}(\alpha,\phi)=0,\quad E_{5,1}(\alpha,\phi) = 0,$$
obtaining in this way the following system of differential
equations in $\phi(x_1)$, being $'=d/dx$:
$$(4ex_1^3 + 3dx_1^2 + 2cx_1 + b)\phi^{iv} +
(60ex_1^2+30dx_1+10c)\phi''' + $$
\begin{equation}\label{:L}
+(240ex_1+60d)\phi''+240 e\phi'=0, \tag{L}
\end{equation}

\medskip

$$(18d+72ex_1)(\phi')^2 + (b+2cx_1+3dx_1^2+4ex_1^3)(\phi'')^2 +$$
\begin{equation}\label{:NL}
+(14c + 42dx_1 +
84ex_1^2)\phi'\phi''+(b+2cx_1+3dx_1^2+4ex_1^3)\phi'\phi'''=0.
\tag{NL}
\end{equation}

Equations \eqref{:L} and \eqref{:NL} are ordinary differential
equations in $\phi'$ with some complex parameters. We can
substitute a new unknown $y$ for $\phi'$ in order to reduce the
order by one. Secondly, by a translation of $x_1$ by a scalar
value
$$x = x_1 - \mu,$$
we can assume that one of the coefficients of the polynomial
$\alpha(x_1)$ vanishes. From now on let us write
$$\alpha(x) = a + b x + c x^2 + e x^4,$$
and let us study the system of differential equations:

\begin{equation}\label{:L2}
(4ex^3 + 2cx + b)y''' + (60ex^2+10c)y''+ 240exy'+ 240ey=0,
\tag{L2}
\end{equation}
\begin{equation}\label{:NL2}
72exy^2 + (b+2cx+4ex^3)(y')^2+(14c+84ex^2)yy'+(b+2cx+4ex^3)yy''=0.
\tag{NL2}
\end{equation}

The first equation \eqref{:L2} is a linear equation in $y$. In
this special case we will be able to completely solve the equation
$\eqref{:L2}$ and then prove that solutions of \eqref{:L2} do not
satisfy the non-linear equation \eqref{:NL2}. Then, the only
solution of the system is given by the function $y=0$ that
corresponds to $\phi = \lambda_0\in\mathbb{C}$, and then the
potential
$$V = \phi + \frac{\alpha(x_1)}{2}x_2^2 + \beta(x_1,x_2)x_2^3,$$
is of the form given in the statement of the theorem.

\medskip

{\bf Solution of the equation \eqref{:L2} }

This linear equation is solvable by elementary methods.
Fortunately, its Galois group is trivial, and therefore we can
look for a fundamental system of solutions that are rational
functions over $x,a,b,c,e$. The main problem is that such a system
of solutions does not always specialize to a particular system of
solutions when fixing the values of the parameters $a,b,c,e$.
There are some values of the parameters that correspond to
degeneracy of the system of solutions. The equations of this
\emph{locus of degeneration} are given by the wronskian of the
fundamental system. When the wronskian vanishes, the fundamental
system degenerates, and then a different solution appears. We have
to consider also these restricted problems independently.

\medskip

First, we find the general solution of \eqref{:L2} for generic
values of the parameters, depending on arbitrary constants $K_1$,
$K_2$, $K_3$:
\begin{equation}\label{:SolL2}
y = \frac{K_1 N_1 + K_2 N_2 + K_3 N_3}{D^3} = \frac{P}{D^3},
\end{equation}
where
$$D = 4ex^3+2cx+b,$$
$$N_1 = x(4ec^2x^5-42becx^4-(6c^3+48eb^2)x^3+9b^2cx+6b^3),$$
$$N_2 = x(8ecx^5-12bcex^4-(24eb^2+12c^3)x^3-12bc^2x^2+3b^3),$$
$$N_3 = 8c^2e^2x^6 -84bce^2x^5 -
(12c^3e+168b^2e^2)x^4+21b^3ex-3b^2c^2.$$

Let us study for which parameters the above expression is not the
general solution of \eqref{:L2}. This happens if and only if the
wronskian of the fundamental system of solutions vanishes. We know
that the wronskian of the fundamental solutions $N_i/D$ vanishes
if and only if the wronskian of the numerators $N_i$ vanishes. We
compute it, obtaining

$$\begin{array}{lll}W(N_1,N_2,N_3)&=& 162c^3b^7 + 1296b^6c^4x +
3888b^5c^5x^2 \\& &+ (2592b^6ec^3+5184b^4c^6)x^3
+(2592c^7b^3+15552b^5c^4e)x^4 \\ & & +31104b^4c^5ex^5  +
(15552b^5c^3e^2+20736b^3c^6e)x^6 \\ & & + 62208b^4c^4e^2 x^7
 + 62208b^3c^5e^2x^8 + 41472b^4c^3e^3x^9 \\ & & + 82944
b^3c^4e^3x^{10} + 41472 b^3c^3e^4 x^{12}. \end{array}$$ The
equation $W(N_1,N_2,N_3)=0$ is simple to solve and it has two
independent solutions, that we will consider independently.
\\

\begin{enumerate}
\item[(a)] \{$b = 0$\}
\item[(b)] \{$c = 0$\}
\end{enumerate}

\medskip

{\bf Case A, $b=0$.}

If $b$ vanishes then the system of equations is:
\begin{equation}\label{:L3}
(4ex^3 + 2cx)y''' + (60ex^2+10c)y''+ 240exy'+ 240ey=0, \tag{L3}
\end{equation}
\begin{equation}\label{:NL3}
72exy^2 + (2cx+4ex^3)(y')^2+(14c+84ex^2)yy'+(2cx+4ex^3)yy''=0.
\tag{NL3}
\end{equation}
We obtain a new general solution for this restricted case by
direct integration of the linear equation. The general solution is:
\begin{equation}\label{:solL3}
y = \frac{K_1N_{31}+K_2N_{32}+K_3N_{33}}{D_3^3} = \frac{P_3}{D_3^3},
\end{equation}
where
$$D_3=2ex^2+c,$$
$$N_{31}=6ex^2-c,$$
$$N_{32}=x(-3c+2ex^2),$$
$$N_{33}=(c^3+6ec^2x^2+16e^3x^6)x^{-3}.$$

In this case, the wronskian of the numerators is,
$$W(N_{31},N_{32},N_{33}) = \frac{96ec(6ec^2x^4+16e^3x^6+5c^3)}{x^4},$$
so that this system of solutions is degenerated only when $c=0$,
which is considered as a particular case of the following.

\medskip

{\bf Case B, $c = 0$.}

Let us finally consider the last case $c=0$. The system of
equations is now as follows.
\begin{equation}\label{:L4}
(4ex^3 + b)y''' + 60ex^2y''+ 240exy'+ 240ey=0, \tag{L4}
\end{equation}
\begin{equation}\label{:NL4}
72exy^2 + (b+4ex^3)(y')^2+84ex^2yy'+(b+4ex^3)yy''=0. \tag{NL4}
\end{equation}
The general solution of \eqref{:L4} is given by,
\begin{equation}\label{:solL4}
y=\frac{K_1N_{41}+K_2N_{42}+K_3N_{43}}{D_4^3}=\frac{P^3}{D_4^3}
\end{equation} where
$$D_4 = 4ex^3+b,$$
$$N_{41}=x(b-8ex^3),$$
$$N_{42}=x^2(b-2ex^3)$$
$$N_{43}=b^2-28ebx^3+16e^2x^6.$$
We analyze when the wronskian
$$W(N_{41},N_{42},N_{43}) = 2b^4 + 32eb^3x^3 + 192e^2b^3x^6 +
512be^3x^9 + 512e^4x^{12}$$ vanishes. Because of the coefficient
in the 12th power of $x$, this wronskian does not vanish for any
of the considered values of the parameters, and then this general
solution does not degenerate.

\medskip

{\bf Common solutions with the non-linear equation}

Here we look for solutions of the linear equation that also
satisfy the considered non-linear equation. We directly substitute
the general solution \eqref{:SolL2} of \eqref{:L2} into
\eqref{:NL2}. Then we obtain a rational expression:
$$\frac{Q(x;b,c,e,K_1,K_2,K_3)}{D^7} = 0,$$
with $Q$ a polynomial in $x$ depending on the parameter $b,c,e,
K_1, K_2, K_3$. Thus, we look for the values of the parameters
that force $Q$ to vanish. If we develop $Q$ as a differential
polynomial in $P$ and $D$ we obtain the following expression: $Q =
(72exD+((-6 + 42c + 262ex^3)DD'+21D'^2+D^2-3DD'')P^2
+(14c+84ex)(D^2-6DD')PP'+D^2P^2$. Note that the polynomials $D,
D', D''$ do not depend on the parameters $K_i$, and $P, P'$ are
linear in such parameters. It follows that $Q$ is polynomial in
$x$ of degree $16$ whose coefficients are homogeneous polynomials
of degree two in the parameters $K_1, K_2, K_3$, that is,
$$Q = \sum_{i=0}^{16} C_i(K_1,K_2,K_3,b,c,e) x^i = 0,$$
where:
$$C_i = \left(K_1, K_2, K_3\right) \left(\begin{matrix} \lambda^i_{11} & \lambda^i_{12} &
\lambda^i_{13} \\ \lambda^i_{21} & \lambda^i_{22} & \lambda^i_{23} \\
\lambda^i_{31} & \lambda^i_{32} & \lambda^i_{33}
\end{matrix}\right) \left(\begin{matrix}K_1 \\ K_2 \\ K_3
\end{matrix}\right).$$ Finally, the coefficients $\lambda^i_{jk}$
are polynomials in the parameters $b,c,e$. The  common solution of
\eqref{:L2} and \eqref{:NL2} corresponds to values of the
parameters $b,c,e,K_1,K_2,K_3$ that are solutions of the system of
17 algebraic equations:
$$C_i(K_1,K_2,K_3,b,c,e) = 0, \quad\quad i = 0,\ldots,16.$$

Each equation $C_i=0$ is the equation of a $3$-dimensional cone in
the affine space over the field $\mathbb C(b,c,e)$; thus the
equation of a conic curve in the projective plane $\mathbb
P(\mathbb C(b,c,e))$ of homogeneous coordinates $K_1, K_2, K_3$.
Two conic curves intersect in four points. This simplifies the
computations, since we considerate some proper subset of 17 equations and check the incompatibility of it, provided that $b$ and $c$ are
different from zero.

Following the same schema, we analyze the exceptional case $b=0$.
We put the solution \eqref{:solL3} of \eqref{:L3} into the
equation \eqref{:NL3}. We obtain an expression
$$ \frac{Q_3}{x^7D_3^7}= 0,$$
where $Q_3 = (72exD_3 + ((-6 + 42c + 262ex^3)D_3D'_3 + 21D_3'^2 +
D_3^2 - 3D_3D_3'')P_3^2 +(14c+84ex)(D_3^2-6D_3D'_3)P_3P_3' +
D_3^2P_3^2$. Here, we find that $Q_3$ is a polynomial in $x$ of
degree $18$,
$$Q_3 = \sum_{i=0}^{18} E_i(c,e,K_1,K_2,K_3) x^i,$$
and again the system of algebraic equations $\{E_i = 0\}$ is
qualitatively similar to the above system $\{C_i=0\}$. The same
analysis is carried out in the other exceptional case $c=0$. In
this last case we obtain that $Q_4 = (72exD_4 + (-6 +
262ex^3)D_4D'_4 + 21D_4'^2 + D_4^2 - 3D_4D_4'')P_4^2
+84exD_4^2-6D_4D'_4)P_4P_4' + D_4^2P_4^2$ is of degree $18$ in
$x$, and then we have a system $\{F_i=0\}$ of $19$ algebraic
equations that form an incompatible system.
\end{proof}

\section{Final comments and open questions}
One open problem presented in \cite{acbl} is the problem of
determining families of classical Hamiltonians with an invariant
plane and NVE of Hill-Schr\"odinger type whose polynomial
coefficient is of even degree greater than two. Currently the case
of even degree greater than four is still open.
\\

It is very well known that the classical Hen\'on-Heiles system
with $A=0$ and $\lambda=6$ is integrable. One question is as
follows: how must be $\lambda$, $\beta(x_1,x_2)$ and the
coefficients $A_0,\ldots,A_n$ to obtain integrable general
Hen\'on-Heiles systems?
\\

The problem of analyzing the monodromy of the NVE of integral
curves of a two degrees of freedom Hamiltonian (both classical and
general) was studied by Baider, Churchill and Rod at the beginning
of the 90's (see \cite{bachro}). Their method is quite different,
they imposed the monodromy group to verify some special properties
that were translated as algebraic conditions in the Hamiltonian
functions. Their theory was restricted to the case of fuchsian
groups, which in terms of Galois theory means regular
singularities, while we work in the general case. It would be very
interesting to compare both approaches.
\\

 Another problem is to apply these methods
to higher variational equations and non-autonomous hamiltonian
systems.

\section*{Acknowledgements}
The research of the first author is partially supported by grant
FPI Spanish Government, project DGICYT MTM 2006-00478. The second
and third authors acknowledge the Universidad Sergio Arboleda,
specially Reinaldo N\'u\~nez and Jes\'us Hernando P\'erez for
their motivation and support. Finally, we would like to thank Anna
de Mier for their valuable comments and suggestions.

\end{document}